\DeclarePairedDelimiter{\floor}{\lfloor}{\rfloor}
\DeclarePairedDelimiter{\ceil}{\lceil}{\rceil}
\newtheorem{theorem}{Theorem}
\newtheorem{definition}{Definition}
\newtheorem{observation}{Observation}
\newtheorem{lemma}{Lemma}
\newtheorem{example}{Example}
\newcommand{\bF}{\mathbb{F}}
\newcommand{\bZ}{\mathbb{Z}}
\renewcommand*{\thefootnote}{\fnsymbol{footnote}}
\begin{document}

\title{Coding For Locality In Reconstructing Permutations}

\author{\textbf{Netanel Raviv}$^*$, \IEEEauthorblockN{\textbf{Eitan Yaakobi}$^*$, and {\textbf{ Muriel M\'{e}dard}}$^\dagger$}
	\IEEEauthorblockA{
		$^*$Computer Science Department, Technion -- Israel Institute of Technology, Haifa 3200003, Israel\\
		$^\dagger$Research Lab. of Electronics, Massachusetts Institute of Technology, Cambridge, MA 02139, USA\\
		\textit{netanel.raviv@gmail.com, yaakobi@cs.technion.ac.il, medard@mit.edu}}
		
}
\vspace{-0.13ex}

\maketitle

\IEEEpeerreviewmaketitle

\begin{abstract}
The problem of storing permutations in a distributed manner arises in several common scenarios, such as efficient updates of a large, encrypted, or compressed data set. This problem may be addressed in either a combinatorial or a coding approach. The former approach boils down to presenting large sets of permutations with \textit{locality}, that is, any symbol of the permutation can be computed from a small set of other symbols. In the latter approach, a permutation may be coded in order to achieve locality. This paper focuses on the combinatorial approach.

We provide upper and lower bounds for the maximal size of a set of permutations with locality, and provide several simple constructions which attain the upper bound. In cases where the upper bound is not attained, we provide alternative constructions using Reed-Solomon codes, permutation polynomials, and multi-permutations.
\end{abstract}
\footnotetext{
The work of Netanel Raviv was done while he was a visiting student at MIT, under the supervision of Prof. M\'{e}dard. This work is part of his Ph.D. thesis performed at the Technion.}

\begin{IEEEkeywords}
Distributed storage, permutation codes, locality, Reed-Solomon codes, permutation polynomials, multi-permutations.
\end{IEEEkeywords}

\renewcommand{\thefootnote}{\arabic{footnote}}

\section{Introduction}\label{section:introduction}
For an integer $n$, let $S_n$ be the group of all permutations on $n$ elements. Given a permutation $\pi\in S_n$ we consider the problem of storing a representation of $\pi$ in a distributed system of storage nodes. This problem arises when considering efficient \textit{permutation updates} to a distributed storage system. That is, in a system which stores a file with large entries whose order commonly changes, one might prefer to store the permutation of the entries, rather than constantly shift them around. Alternatively, the stored file may be \textit{signed}, \textit{hashed}, or \textit{compressed}, and storing the permutation alongside the file allows to update the file without altering its signature. Perhaps the most natural example for an update is the common operation of \textit{cut and paste}, which may be modeled as a permutation update.


The crux of enabling efficient storage lies in the notion of \textit{locality}, that is, any failed storage node may be reconstructed by accessing a small number of its neighbors. The corresponding coding problem is often referred to as \textit{symbol locality}, in which every symbol of a codeword is a function of a small set of other symbols. 
In this paper we consider symbol locality. Further, since our underlying motivation is allowing \textit{small} updates to be done efficiently, we disregard the notion of minimum distance between the stored permutations, and focus solely on locality. 

Locality in permutations may be considered in either a combinatorial or a coding approach. Under the combinatorial approach, which is the main one in this paper, the underlying motivation is set aside, and the problem boils down to finding (or bounding the maximum size of) sets of permutations which present locality. Under the coding approach, the given permutation may be coded in order to achieve locality, e.g. by using a \textit{locally recoverable code} (LRC). The combinatorial approach clearly outperforms the use of LRCs in terms of redundancy (see Section~\ref{section:previousWork}), at the price of not being able to store any permutation. Furthermore, it may be shown~\cite{arXiv} that storing a subset of $S_n$ using an LRC while maintaining the same overhead as in the combinatorial approach does not enable an instant access to the elements of the permutation, as discussed further in this section.

The combinatorial approach may also be applied in rank modulation coding for flash memories~\cite{RankMod}, in which each flash cell contains an electric charge, and a block of cells contains the permutation which is induced by the charge levels. A rank modulation code which enables local erasure correction allows quick recovery from a complete loss of charge in a cell. Yet, this application requires some further adjustments of our techniques, since the charge levels usually represent \textit{relative} values rather than \textit{absolute} ones.

A system which stores $\pi\in S_n$ is required to answer either $\pi^{-1}(i)=?$ (denoted Q1) or $\pi(i)=?$ (denoted Q2) quickly, for any~$i$. In the combinatorial approach, either one of Q1 or Q2 becomes trivial, depending if we consider the permutation at hand as $\left(\pi(1),\ldots,\pi(n)\right)$ or $\left(\pi^{-1}(1),\ldots,\pi^{-1}(n)\right)$. That is, when storing the latter, answering Q1 is straightforward, and answering Q2 is possible by inspecting $\pi^{-1}(i),\pi^{-1}({\pi^{-1}(i)}),\ldots,$ etc., until $i$ is found (see~\cite[ch.~1.3, p.~29]{DummitAndFoote}). Hence, the number of required queries for Q1 is 1 (or $\log n$ bits), and for Q2 it is at most the length of the longest cycle in $\pi$. Although it is not the general purpose of this research, we take initial steps towards efficient retrieval of $\pi(i)$ and $\pi^{-1}(i)$ simultaneously. A more expansive discussion will appear in the full version of this paper.

Since a variety of mathematical techniques are used throughout this paper, in each technique we consider the permutations in $S_n$ as operating on a different sets of symbols. These sets may be either $[n]\triangleq\{1,\ldots,n\}$ or $\{0,\ldots,n-1\}$. 
Alternatively, we may assume that $n$ is a power of prime, and $\{0,1,\ldots,n-1\}$ is an enumeration of the elements in $\bF_n$, the finite field with $n$ elements, where the additive identity element of $\bF_n$ is denoted by ``0'' and the multiplicative identity element is denoted by ``1''. Unless otherwise stated, we consider permutations in the \textit{one line representation} (one-liner, in short), that is, $\pi\triangleq\left(\pi_1,\ldots,\pi_n\right)=\left(\pi^{-1}(1),\ldots,\pi^{-1}(n)\right)$. Given a set $S\subseteq S_n$, we say that $S$ has locality $d$ if for any $\pi\in S$, any symbol $\pi_i$ may be computed from $d$ other symbols of $\pi$. The \textit{rate} of $S$ is defined as $\log{|S|}/\log(n!)$.

This paper is organized as follows. 
 Section~\ref{section:previousWork} summarizes related previous work. Section~\ref{section:bounds} discusses upper and (existential) lower bounds on the maximal possible size of subsets of $S_n$ which present locality. Section~\ref{section:highRate} provides several simple constructions, some of which attain the upper bound presented in Section~\ref{section:bounds}. One of these constructions is enhanced by using Reed-Solomon codes and permutation polynomials in Subsection~\ref{section:fromCodes}, and by using multi-permutations in Section~\ref{section:multipermutations}.
Concluding remarks and problems for future research are given in Section~\ref{section:discussion}. For the lack of space, some proofs are omitted, and are included in the full version of this paper~\cite{arXiv}. Additional omitted results are briefly summarized in Section~\ref{section:additional}.

\section{Previous Work}\label{section:previousWork}
Coding over $S_n$, endowed with either of several possible metrics~\cite{DezaSurvey}, was extensively studied under many different motivations. For example,  codes in $S_n$ under the Kendall's~$\tau$ metric~\cite{Barg} and the infinity metric~\cite{LimitedMagnitude} were shown to be useful for non-volatile memories, and  codes under the Hamming metric (also known as permutation arrays) were shown to be useful for power-line communication~\cite{powerline}. In all of these works, the permutations are encodings of messages, and hence should maintain minimum distance constraints. In this work, however, the permutation itself is of interest, and thus minimum distance is not considered. 

As mentioned in the Introduction, we consider permutations in their one line representation (one-liner, in short). Our problem may be seen as allowing local \textit{erasure} correction of permutations in the one-liner. Erasure and deletion correction of permutation codes was discussed in~\cite{erasures}. In this work it was shown that the most suitable metric for erasure correction (called ``stable erasure'' in~\cite{erasures}) is the Hamming metric, that measures the number of entries in which the one-liners differ. However, the work of~\cite{erasures} was motivated by the rank modulation scheme in flash memories and thus locality was not discussed. 

Furthermore, it is obvious that a permutation array with minimum Hamming distance $n-d+1$ allows local erasure correction of any symbol from any $d$ other symbols. However, constructing permutation arrays with minimum Hamming distance is an infamously hard problem, let alone in the high distance region~\cite{PermutationsHamming}. Moreover, construction of permutation arrays with minimum Hamming distance is \textit{not} equivalent to finding sets of permutations with locality, since the inverse is clearly untrue, that is, a set with locality $d$ does not imply a permutation array with minimum Hamming distance $n-d+1$.

A similar motivation lies in the work of~\cite{Synchronizing}, where the authors considered updates which involve \textit{deletions} and \textit{insertions} to a file in a distributed storage system. Clearly, a permutation update can be seen as a series of deletions and insertions and conversely,  a deletion is treated in~\cite{Synchronizing} as a permutation.
Our work may be seen as an extension of ``scheme P'' from~\cite{Synchronizing} to \textit{permutation} updates, as we handle various types of larger sets of permutations. 


When considering the coding approach, a standard technique is to use LRCs. An $(m,k,d)$ LRC is a code that produces an $m$-symbol codeword from a $k$-symbol message, such that any symbol of the produced codeword may be recovered by contacting at most $d$ other symbols. LRCs have been subject to extensive research in recent years~\cite{family}, mainly due to their application in distributed storage systems. Consider any permutation $\pi\in S_n$ as a string over the alphabet\footnote{More precisely, the alphabet $[n]$ when seen as a subset of a large enough finite field $\bF_q$, over whom the construction of the LRC is possible.} $[n]$, and encode it to $m$ symbols using an optimal \textit{systematic} LRC. LRCs that encode $n=k$ symbols to $m$ symbols and admit locality of $d$ satisfy~\cite[Theorem~2.1]{family}
\begin{eqnarray}\label{equation:LRCRateBound}
\frac{n}{m}\le \frac{d}{d+1},
\end{eqnarray}
i.e., their \textit{rate} is bounded from above by $d/(d+1)$. Thus, $n/d$ redundant information symbols are required to achieve locality of $d$. Using the combinatorial approach we achieve smaller storage overhead, in the price of not being able to store any permutation. In addition, in Subsection~\ref{section:lowerBound} it will be shown that there exists a \textit{coset} of an optimal locally recoverable code~$C$, which contains a set $S$ of words that can be considered as permutations. However, this claim is merely existential, and does not provide any significant insights on the structure of $S$.

\section{Bounds}\label{section:bounds}
Let $A(n,d)$ be the maximum size of a subset of $S_n$ with locality $d$. This section presents an upper bound and an existential lower bound on $A(n,d)$. 
This upper bound is later improved for $d=1$, and is attained by a certain construction in Section~\ref{section:concatenation} to follow. 

%


\subsection{Upper Bounds}\label{section:upper}
The bound for LRCs~\eqref{equation:LRCRateBound} can be used as-is if $n$ is a power of prime, and the set of permutations is considered as a non-linear code in $\bF_n^n$. By a simple adaptation of~\cite[Theorem~2.1]{family} to non-linear codes, we have that a non-linear code in $\bF_n^n$ with locality $d$ contains at most $n^{\floor*{dn/(d+1)}}$ codewords. This bound may be improved by utilizing the combinatorial structure of permutations.

\begin{theorem}\label{theorem:bound}
	$A(n,d)\le \frac{n!}{\ceil*{\frac{n}{d+1}}!}$.
\end{theorem}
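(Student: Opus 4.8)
The plan is to argue that a set $S \subseteq S_n$ with locality $d$ cannot be too large because fixing a well-chosen subset of coordinates forces many of the remaining coordinates, and a permutation in the one-line representation has limited ``freedom'' once enough coordinates are pinned down. Concretely, I would first observe that locality $d$ means every coordinate $\pi_i$ is a function of some $d$ other coordinates; equivalently, there is a \emph{repair set} $R_i \subseteq [n]\setminus\{i\}$ with $|R_i| = d$ and a function $f_i$ such that $\pi_i = f_i(\pi|_{R_i})$ for every $\pi \in S$. The goal is to find a small set $I \subseteq [n]$ of coordinates that ``generates'' all of $[n]$, in the sense that once $\pi|_I$ is fixed, every other coordinate is determined — then $|S| \le n^{|I|}$, and using the permutation structure one can sharpen $n^{|I|}$ to a falling factorial.

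The key combinatorial step is a greedy/closure argument: start with the ``known'' set equal to $\emptyset$ and repeatedly, as long as some coordinate $i$ has its repair set $R_i$ entirely inside the currently known set, add $i$ to the known set for free (it is determined). When no such $i$ exists, we are forced to pay for a new coordinate by adding it to $I$ and declaring it known. The point is to bound how often we must pay. Each time we pay for one coordinate, the known set has the property that no \emph{single} un-known coordinate is yet repairable from it; but once the known set has size $s$, every coordinate whose repair set of size $d$ lies in a set of size $s$... — more carefully, I would set this up so that after paying for $\lceil n/(d+1)\rceil$ coordinates one can show the closure is everything: intuitively each ``paid'' coordinate together with a batch of $d$ ``free'' coordinates it helps unlock accounts for $d+1$ coordinates, giving the $n/(d+1)$ factor. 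Thus $|I| \le n - \lceil n/(d+1)\rceil$, i.e.\ the number of \emph{free} (determined) coordinates is at least $\lceil n/(d+1)\rceil$.

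Once we know that at least $k \triangleq \lceil n/(d+1)\rceil$ coordinates are determined by the remaining $n-k$, I would finish as follows. The map $\pi \mapsto \pi|_{[n]\setminus F}$ (where $F$ is the set of $k$ free coordinates, of size at least $k$) is injective on $S$. But the restriction $\pi|_{[n]\setminus F}$ is an injective sequence of $n-k$ distinct values drawn from $[n]$, so the number of possibilities is at most $n!/k!$ — the number of ways to fill $n-k$ ordered slots with distinct symbols from $[n]$. Hence $|S| = A(n,d) \le n!/\bigl\lceil n/(d+1)\bigr\rceil!$, as claimed.

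The main obstacle is making the greedy closure bound tight, i.e.\ proving that one is forced to pay for at most $n - \lceil n/(d+1)\rceil$ coordinates rather than something weaker like $n - \lfloor n/d \rfloor$. The delicate point is the accounting: when we pay for a coordinate and its repair set becomes available, it is not automatic that exactly $d$ fresh coordinates get unlocked — the repair sets may overlap in inconvenient ways. I expect the right way to handle this is to process coordinates in a careful order (or to choose the repair sets adversarially in favor of the bound) so that each paid coordinate is ``responsible'' for a disjoint block of size $d+1$, possibly with the last block smaller, which is exactly what produces the ceiling $\lceil n/(d+1)\rceil$; verifying that such an ordering always exists is the crux of the argument.
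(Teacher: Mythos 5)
Your overall architecture is the right one, and it matches the paper's dependency-graph argument in structure: fix a repair set $R_i$ of size $d$ for each coordinate, show that some set $F$ of at least $k\triangleq\lceil n/(d+1)\rceil$ coordinates is determined by the remaining ones, and then count the restrictions $\pi|_{[n]\setminus F}$, which are strings of at most $n-k$ \emph{distinct} symbols from $[n]$, giving $|S|\le n!/k!$. That final counting step is exactly what upgrades the generic non-linear LRC bound $n^{\lfloor dn/(d+1)\rfloor}$ to the stated bound, and you have it right.

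The gap is in the step you yourself flag as the crux: the claim that the greedy closure pays for at most $n-k$ coordinates. As you describe it (``when nothing is repairable from the known set, pay for one arbitrary new coordinate''), the claim fails: take $d=1$, $R_i=\{i+1\}$ for $i<n$ and $R_n=\{n-1\}$. Paying for coordinate $2$ frees only coordinate $1$, and thereafter each further payment $3,4,\ldots$ frees nothing new, so a legal run of your greedy pays $n-1$ coordinates and frees a single one --- far short of the required $\lceil n/2\rceil$ free coordinates (whereas paying for $n$ first frees everything). So the hoped-for property that ``each paid coordinate is responsible for a disjoint block of size $d+1$'' is not delivered by an arbitrary order, and the ordering argument you defer is precisely what is missing. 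The standard fix changes the unit of accounting: in each round pick any coordinate $i$ outside the current known set $K$, pay for \emph{all} of $R_i\setminus K$ (at most $d$ coordinates) at once, and collect $i$ itself for free, since then $R_i\subseteq K\cup(R_i\setminus K)$. Each round enlarges $K$ by at most $d+1$ elements and produces exactly one new free coordinate, so there are at least $\lceil n/(d+1)\rceil$ rounds, hence at least that many free coordinates, each reconstructible (in round order) from the paid ones; injectivity of $\pi\mapsto\pi|_{[n]\setminus F}$ follows. With this lemma in place, your concluding count goes through verbatim.
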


Using the Stirling approximation, Theorem~\ref{theorem:bound} implies an upper bound of $\frac{d}{d+1}$ on the rate of a set of permutations with locality~$d$.

The trivial subset $C=S_n$ admits locality of $d=n-1$, and attains the upper bound. In addition, the \textit{alternating group}, and its complement, have locality of $n-2$. This is due to the fact that a given permutation with two erased symbols can be corrected to either of two possible permutations, one of which is odd and the other is even. Hence, the \textit{alternating group} and its complement attain this upper bound as well. According to these examples, we have that $A(n,n-1)=n!$, and $A(n,n-2)=n!/2$.

For $d<n-2$ there exists a large gap between this bound and the sizes of the sets presented in this paper. This gap may be resolved for $d=1$ by using a graph theoretic argument on the dependency graph in the proof of Theorem~\ref{theorem:bound}.

As a result, we obtain the following bound on the maximal size of sets of permutations with locality one.

\begin{theorem}\label{theorem:bound1}
	$A(n,1)\le n!!\triangleq\prod_{i=0}^{\ceil*{n/2}-1}(n-2i)$.
\end{theorem}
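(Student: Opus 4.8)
The plan is to build on the ``dependency graph'' already introduced in the proof of Theorem~\ref{theorem:bound} and to argue that when $d=1$ this graph is extremely rigid. Recall that for a set $S$ with locality $d$, each symbol $\pi_i$ is a function of some fixed set $D_i$ of $d$ other coordinates, and that $D_i$ does not depend on $\pi\in S$ (otherwise one cannot know which coordinates to query). For $d=1$ this means there is a fixed function $f:[n]\to[n]$ with $f(i)\neq i$ such that, for every $\pi\in S$, the value $\pi_i$ is determined by $\pi_{f(i)}$. First I would argue that on each coordinate this determining map must in fact be a \emph{bijection}: the map sending $\pi_{f(i)}$ to $\pi_i$ is a function $\bF_n\to\bF_n$ (or $[n]\to[n]$) that, restricted to the values $\{\pi_{f(i)} : \pi\in S\}$, is injective, because two permutations agreeing in coordinate $f(i)$ must agree in coordinate $i$ --- more carefully, if $\pi_{f(i)}=\sigma_{f(i)}$ then $\pi_i=\sigma_i$; so the induced map on the relevant value set is injective, and symmetrically one gets control the other way, letting us treat the edge $i\to f(i)$ as carrying a bijection $g_i$ with $\pi_i = g_i(\pi_{f(i)})$.

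Next I would analyze the functional graph of $f$ on vertex set $[n]$ (each vertex has out-degree exactly $1$). A standard fact is that such a graph is a disjoint union of ``rho-shaped'' components, each consisting of a directed cycle with trees hanging off it. The key structural claim is that \emph{every component is a single $2$-cycle}, i.e.\ $f$ is a fixed-point-free involution, which forces $n$ to be even. To see why longer structures are impossible: if $j$ is a vertex not on a cycle, it has some vertex $i$ with $f(i)=j$ but also $f(j)=k$ with $k\neq j$; then $\pi_j$ is both determined by $\pi_i$ and determines (together with the bijections) a constraint chain --- walking along $f$ from $j$ eventually enters a cycle of length $\ell$, and composing the bijections around that cycle yields a fixed permutation $h$ of $\bF_n$ with $\pi_v = h(\pi_v)$ for the cycle vertex $v$ and \emph{all} $\pi\in S$; if $h\neq \mathrm{id}$ this pins $\pi_v$ to lie in the fixed-point set of $h$, a proper subset of $[n]$, contradicting that $\pi$ ranges over permutations (so $\pi_v$ must take all $n$ values as $\pi$ varies, provided $|S|$ is large --- one has to be a little careful, but the clean route is: if the worst case $h=\mathrm{id}$ holds around every cycle, then a vertex $w$ feeding \emph{into} the cycle satisfies $\pi_w = (\text{bijection})(\pi_v)$, and $\pi_w,\pi_v$ being two coordinates of a permutation cannot be related by a fixed bijection unless that bijection is forced, again collapsing possibilities). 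The cleanest formulation: any component that is not a $2$-cycle lets one coordinate be written as a fixed bijective function of another distinct coordinate valid for all of $S$, and the pair $(\pi_v,\pi_w)=(x, \phi(x))$ ranging over a permutation's coordinates as $\pi$ varies means these two coordinates are never equal, i.e.\ $\phi$ is fixed-point-free, but then chaining constraints around and into cycles eventually forces $|S|$ down by a factor exceeding what the count below gives --- so the extremal $S$ has $f$ an involution.

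Granting that $f$ pairs up the $n$ coordinates into $n/2$ disjoint transpositions $\{i_1,j_1\},\ldots,\{i_{n/2},j_{n/2}\}$, a permutation in $S$ is determined by its values on the $n/2$ ``free'' coordinates $i_1,\ldots,i_{n/2}$, since the value on $j_t$ equals $g_{j_t}(\pi_{i_t})$. Now I count: $\pi_{i_1}$ can be any of $n$ symbols; once $\pi_{i_1}$ is chosen, $\pi_{j_1}=g_{j_1}(\pi_{i_1})$ is forced and must differ from $\pi_{i_1}$, so these two coordinates consume two distinct symbols; then $\pi_{i_2}$ can be any of the remaining $n-2$ symbols, forcing $\pi_{j_2}$, and so on. This yields at most $n(n-2)(n-4)\cdots = \prod_{i=0}^{\lceil n/2\rceil - 1}(n-2i)$ permutations, which is exactly $n!!$ as defined; for odd $n$ the product notation $\prod_{i=0}^{\ceil*{n/2}-1}(n-2i)$ already accounts for the leftover single coordinate (its value is then forced to be the one remaining symbol). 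Hence $A(n,1)\le n!!$. The main obstacle I anticipate is the rigidity argument in the second paragraph: making precise \emph{why} a functional-graph component other than a $2$-cycle cannot support a large set of permutations, i.e.\ ruling out the possibility that the composed bijection around a cycle is the identity in a way that still leaves many permutations --- this requires carefully combining the ``distinct coordinates can't be a fixed bijection of each other over all of $S$'' observation with the tree-edges feeding the cycle, and is where the proof must be most delicate.
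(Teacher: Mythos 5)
Your counting in the final paragraph (components consume two fresh symbols at a time, giving $n(n-2)(n-4)\cdots$) is the right engine, but the reduction you rest it on --- that the dependency map $f$ must be a fixed-point-free involution, so that the functional graph consists solely of $2$-cycles --- is a genuine gap, and in fact the claim is false as stated. A locality-$1$ set need not have $f$ an involution: take $S=\{\pi^{(c)}\}_{c=0}^{n-1}$ with $\pi^{(c)}_i \equiv i+c \pmod n$; here $\pi_i$ is recoverable from $\pi_{i+1}$ alone, so $f$ is a single $n$-cycle, and the bijection composed around that cycle is the identity, exactly the case your contradiction argument cannot exclude (note also that two distinct coordinates \emph{can} be related by a fixed fixed-point-free bijection over all of $S$, as $\pi_w=\pi_v+1 \bmod n$ shows, so that observation does not force a collapse either). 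You acknowledge this step is ``delicate'' and leave it as an assertion that non-matching components ``force $|S|$ down,'' but that assertion is essentially the whole theorem for such $f$'s. Your treatment of odd $n$ inherits the problem: a fixed-point-free involution does not exist on an odd number of coordinates, and since $f(i)\neq i$ there is no legitimate ``leftover single coordinate.''

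The fix is to drop the rigidity claim entirely and run the count on an \emph{arbitrary} functional graph of $f$. Orient $i\to f(i)$; each weakly connected component contains one directed cycle with trees feeding into it, and since every vertex on the cycle is determined by its successor, the values of $\pi$ on an entire component are determined by the value at a single vertex of that component. Each component has at least two vertices (because $f(i)\neq i$), and the value sets used by distinct components are disjoint (coordinates of a permutation are distinct). Processing components $C_1,\ldots,C_c$ greedily, the representative of $C_j$ has at most $n-\sum_{l<j}|C_l|\le n-2(j-1)$ admissible values, and $c\le\lfloor n/2\rfloor$, so $|S|\le\prod_{j=1}^{c}\bigl(n-2(j-1)\bigr)\le n!!$, for odd $n$ as well. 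This uniform per-component bound is what the ``graph theoretic argument on the dependency graph'' alluded to before Theorem~\ref{theorem:bound1} needs; matchings then appear only as the extremal configuration (realized by the $h=2$ construction of Section~\ref{section:concatenation}), not as a structural necessity you must prove.
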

%
%

Since the set constructed in Section~\ref{section:concatenation} below attains the bound of Theorem~\ref{theorem:bound1} for $d=1$, we have that $A(n,1)=n!!$.

\subsection{Lower Bound}\label{section:lowerBound}
Optimal LRC of length $n$ and locality $d$ may easily be constructed over $\bZ_n$, the set of integers modulo~$n$. This is done by adding $n/(d+1)$ ``parity checks'' to all disjoint sets of $d$ consecutive symbols in $\bZ_n^{n-n/(d+1)}$. This requires that $d+1$ divides $n$, but may easily be adapted to any $d$. The rate of this code attains the upper bound of $\frac{n-n/(d+1)}{n}=\frac{d}{d+1}$, given in~\eqref{equation:LRCRateBound}, and since the code is linear, all its cosets have locality $d$ as well. Since $n!$ of the words in $\bZ_n^n$ are permutations, we obtain the following existential \textit{lower} bound on $A(n,d)$.

\begin{theorem}\label{theorem:lowerBound}
	$A(n,d)\ge n!/n^{n/(d+1)}$.
\end{theorem}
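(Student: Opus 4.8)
The plan is a one-line averaging argument over the cosets of the optimal locally recoverable code sketched above. Assume for simplicity that $d+1$ divides $n$ (the general case is handled by the same construction with $\ceil*{n/(d+1)}$ parity symbols, at the cost of replacing $n/(d+1)$ by its ceiling throughout). First I would fix the code explicitly: identifying the symbol set of $S_n$ with $\bZ_n=\{0,1,\ldots,n-1\}$, partition the $n$ coordinates into consecutive blocks $B_1,\ldots,B_{n/(d+1)}$ of size $d+1$, and set
\begin{equation*}
C=\Big\{\, x\in\bZ_n^n \;:\; \sum\nolimits_{i\in B_t}x_i=0 \ \text{in } \bZ_n \text{ for every } t \,\Big\}.
\end{equation*}
Then $C$ is a subgroup of $(\bZ_n^n,+)$ with $|C|=n^{\,n-n/(d+1)}$, and it has locality $d$, since for $j\in B_t$ the entry $x_j$ equals $-\sum_{i\in B_t,\,i\neq j}x_i$, a function of $d$ other entries.

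Next I would observe that every coset of $C$ has locality $d$ as well. If $v+C$ is a coset and $x=v+c$ with $c\in C$, then for $j\in B_t$ the sum $\sum_{i\in B_t}x_i=\sum_{i\in B_t}v_i$ is a constant that depends only on the coset, so $x_j=\big(\sum_{i\in B_t}v_i\big)-\sum_{i\in B_t,\,i\neq j}x_i$ recovers $x_j$ from the other $d$ coordinates indexed by $B_t$. The same recovery rule applies to any subcollection of $v+C$, so locality $d$ is inherited by every subset of a set that has locality $d$.

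Finally I would count. The code $C$ has exactly $n^n/|C|=n^{\,n/(d+1)}$ cosets, and these partition $\bZ_n^n$, hence they partition the set of $n!$ words of $\bZ_n^n$ whose entries are all distinct, i.e. the permutations in one-line representation. By the pigeonhole principle some coset $v+C$ contains at least $n!/n^{\,n/(d+1)}$ permutations; call this set $S$. By the previous paragraph $S\subseteq v+C$ has locality $d$, so $A(n,d)\ge |S|\ge n!/n^{\,n/(d+1)}$, which is the claimed bound.

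There is essentially no real obstacle here; the only points deserving care are structural rather than computational, namely that an affine shift of an LRC still has locality $d$ and that restricting to a subset preserves locality — both hold precisely because the recovery functions used are affine and do not refer to which codewords happen to be present. (One could also note that the argument gives no control over the internal structure of $S$, matching the remark in Section~\ref{section:previousWork} that this bound is purely existential.)
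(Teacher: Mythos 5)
Your proposal is correct and follows essentially the same route as the paper: build the block-parity-check code over $\bZ_n$ with locality $d$, note that every coset (and hence any subset of a coset) inherits the locality, and apply pigeonhole over the $n^{n/(d+1)}$ cosets to find one containing at least $n!/n^{n/(d+1)}$ permutations. The explicit handling of the coset constant in the repair rule and the remark on the purely existential nature of the bound match the paper's intent.
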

%

The rate which is implied by Theorem~\ref{theorem:lowerBound} asymptotically attains the rate of the upper bound which is implied by Theorem~\ref{theorem:bound}. Yet, the upper and lower bounds \textit{do not} coincide, since Theorem~\ref{theorem:lowerBound} implies higher redundancy (that is, $\log(n!)-\log|S|$) than the one implied by Theorem~\ref{theorem:bound}. It is evident from Theorem~\ref{theorem:bound} and Theorem~\ref{theorem:lowerBound} that enabling larger locality may potentially increase the size of the sets.

%
%
%
%

\section{High Rate Constructions}\label{section:highRate}
This section presents several constructions of sets of permutations with locality, some of which attain the upper bound given in Section~\ref{section:upper}. 
The first set of permutations, discussed in Section~\ref{section:concatenation}, is those that may be seen as a concatenation of $n/h$ permutations in $S_h$, for some $h$ which divides $n$. 
Subsection~\ref{section:rangeRestricted} shows a similar technique which achieves \textit{high} locality.
Subsection~\ref{section:fromCodes} and Subsection~\ref{section:multipermutations}
enhance the construction of Subsection~\ref{section:concatenation} by using Reed-Solomon codes over permutation polynomials, and by using multi-permutations. 

	\subsection{Concatenation of Short Permutations}\label{section:concatenation}
	Obviously, in the one-line representation, any single symbol may easily be computed from all other symbols. This principle leads to simple sets of permutations which can be stored efficiently.
	
	Consider the set $S$ of permutations in $S_n$ which may be viewed as a concatenation of $n/h$ shorter permutations on $h$ elements, for some integer $h$ which divides $n$. That is, their one-liner may be viewed as a concatenation of $n/h$ one-liners, each of which is a permutation of either of the sets $\{1,\ldots,h\},\{h+1,\ldots,2h\},$ etc. Clearly, $S$ contains ${(h!)^{n/h}\cdot (n/h)!}$ permutations, has locality $d=h-1$ and rate $\frac{1}{d+1}$.
		
%
	
	Note that multiple erasures can be corrected simultaneously, as long as they do not reside in the same short permutation. Two erasures from the same short permutation cannot be corrected simultaneously. In addition, Q1 can be answered trivially, and Q2 requires finding the suitable sub-permutation in $n/h$ queries, and additional $h$ queries to locate the desired element.
	
	For $d=1$ we have $|S|=n!!$, and thus this construction attains the bound of Theorem~\ref{theorem:bound1} with equality. However, for any $d=O(1)$, $d\ge 2$, these sets \textit{do not} attain the optimal rate, and are superseded by the existential lower bound of Theorem~\ref{theorem:lowerBound}.
	\subsection{Concatenation of Range-Restricted Permutations}\label{section:rangeRestricted}
	In this subsection we provide a technique for producing sets of permutations with high locality $d\ge n/2$. For a set of symbols $\Sigma$ let $S(\Sigma)$ denote the set of all permutations of $\Sigma$. In this subsection we use the alphabet $\Sigma=\{0,\ldots,n-1\}$, and hence $S(\Sigma)=S_n$. Let $h$ be an integer which divides $n$, and for $i\in\{0,\ldots,n/h-1\}$ let
	\begin{eqnarray*}
		K_i\triangleq S(\{ih,ih+1,\ldots,(i+1)h-1\})\circ \\ S([n]\setminus\{ih,ih+1,\ldots,(i+1)h-1\}),
	\end{eqnarray*}
	where $\circ$ denoted the ordinary concatenation of sequences.
	
	\begin{lemma}
		The set $S\triangleq \cup_{i=0}^{n/h-1}K_i$ has locality $d=n-h-1$.
	\end{lemma}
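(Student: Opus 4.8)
The plan is to show that every permutation in $S$ can recover any single erased symbol from $n-h-1$ others, by distinguishing two cases according to whether the erased position lies in the length-$h$ prefix block or in the length-$(n-h)$ suffix block of the relevant $K_i$. First I would fix $\pi\in S$; by definition $\pi\in K_i$ for some $i$, so $\pi$ is a concatenation $\alpha\circ\beta$, where $\alpha$ is a permutation of $B_i\triangleq\{ih,\ldots,(i+1)h-1\}$ occupying the first $h$ coordinates and $\beta$ is a permutation of $[n]\setminus B_i$ occupying the remaining $n-h$ coordinates. Note that $\alpha$ uses exactly the symbols of $B_i$ and $\beta$ uses exactly the complementary symbols, and these two symbol sets are disjoint and known (they are determined by $i$, which in turn is determined by inspecting which symbols appear, as I discuss below).

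Suppose first the erased coordinate $j$ lies in the suffix block (positions $h+1,\ldots,n$). Then $\beta$ has one erased entry; since $\beta$ is a permutation of the fixed $(n-h)$-element set $[n]\setminus B_i$, the missing symbol is the unique element of that set not appearing among the other $n-h-1$ entries of $\beta$. Thus $n-h-1$ symbols from the suffix block suffice. Now suppose the erased coordinate $j$ lies in the prefix block (positions $1,\ldots,h$). Then $\alpha$ has one erased entry; $\alpha$ is a permutation of $B_i$, and to know $B_i$ it is enough to know $i$, which is recovered by reading any one symbol of the suffix block together with enough prefix symbols — more carefully, the set $B_i$ is exactly the set of symbols appearing in the prefix, so the $h-1$ surviving prefix symbols pin down $B_i$ as the unique block $\{ih,\ldots,(i+1)h-1\}$ containing them (they all lie in one block since $h\mid n$ and the blocks partition $\{0,\ldots,n-1\}$), and then the erased symbol is the unique element of $B_i$ missing from those $h-1$ entries. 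So in this case only $h-1\le n-h-1$ symbols are needed; one may pad arbitrarily with suffix symbols to reach exactly $n-h-1$ if a fixed recovery arity is desired.

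Combining the two cases, any symbol of any $\pi\in S$ is a function of at most $n-h-1$ other symbols, which is the definition of locality $d=n-h-1$; since $d=n-h-1\ge n/2$ whenever $h\le n/2-1$, this also yields the claimed high-locality regime. The one point that needs a little care — the main (minor) obstacle — is the prefix case: one must check that the $h-1$ surviving prefix entries always determine the block $B_i$ unambiguously even before $i$ is known, which holds precisely because the sets $\{ih,\ldots,(i+1)h-1\}$ for $i=0,\ldots,n/h-1$ form a partition of $\{0,\ldots,n-1\}$, so any $h-1\ge 1$ of them already lie in a unique part. (The degenerate case $h=1$ makes the prefix block a single symbol, which is trivially recoverable from the rest; and $h=n$ is excluded since then $d=-1$ is meaningless.) Everything else is a direct counting argument, so no further difficulty is anticipated.
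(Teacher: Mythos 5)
Your overall strategy is the same as the paper's: split on whether the erased coordinate lies in the length-$h$ prefix or the length-$(n-h)$ suffix, and your prefix case is fine (for $h\ge 2$ the $h-1$ surviving prefix entries pin down the block $B_i$ because the blocks partition the symbol set). The gap is in the suffix case, which is precisely the step the paper's proof is really about. You recover the erased symbol as ``the unique element of $[n]\setminus B_i$ not appearing among the other $n-h-1$ entries of $\beta$,'' but this presupposes that the decoder already knows $B_i$, i.e.\ knows $i$; the set $S$ is a union over all $i$, so $i$ is not given. You promised that $i$ ``is determined by inspecting which symbols appear, as I discuss below,'' yet the discussion below treats only the prefix case. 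In the suffix case you read no prefix symbols at all, and the values absent from the $n-h-1$ surviving suffix entries are $B_i\cup\{\pi_j\}$ --- that is, $h+1$ candidate values, not one. As written, the argument is circular at exactly the point where something must be proved.

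The missing idea is the paper's observation that $B_i$ is an \emph{aligned} block $\{ih,ih+1,\ldots,(i+1)h-1\}$ of the partition: distinct aligned blocks are disjoint, and the absent set has only $h+1<2h$ elements when $h\ge 2$, so it contains exactly one aligned block, which must be $B_i$ (the ``gap of $h$ consecutive numbers located in the prefix''); the erased symbol is then the single remaining absent value. Adding this sentence repairs your suffix case and makes the proof match the paper's. Two small further remarks: your parenthetical that $h=1$ is harmless is not right --- for $h=1$ the suffix case genuinely cannot be resolved from $n-2$ reads, since the absent set consists of two values with no way to tell which of them occupies the prefix, so the lemma should be read with $h\ge 2$ (consistent with the subsection's regime $d\ge n/2$); and your intermediate remark that $i$ can be recovered from ``any one symbol of the suffix block'' is also inaccurate (a suffix symbol only rules out one candidate block), though your subsequent correction via the surviving prefix symbols is what actually carries the prefix case.
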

	\begin{proof}
		To repair a missing symbol $\pi_j,0\le j\le n-1$ in $\pi\in S$, distinguish between the cases $j\le h-1$ and $j\ge h$. If $j\le h-1$, $\pi_j$ may clearly be computed from $\{\pi_i\}_{i\in\{0,\ldots,h-1\}\setminus\{j\}}$. If $j\ge h$, the set of symbols $\{\pi_i\}_{i\in\{h,\ldots,n-1\}\setminus\{j\}}$ must contain a gap of $h$ consecutive numbers, which are located in the prefix of $\pi$. After identifying this gap, the missing symbol $\pi_j$ may easily be deduced.
	\end{proof}
	
	The set $S$ contains $\frac{n}{h}\cdot h!\cdot (n-h)!=n\cdot (h-1)!\cdot(n-h)!$ and it does not attain the upper bound given in Theorem~\ref{theorem:bound}. For constant $h$ the rate of $S$ asymptotically approaches 1 as $n$ goes to infinity, since
	\begin{eqnarray*}
		\frac{\log(n\cdot (h-1)!\cdot(n-h)!)}{\log(n!)}\ge\frac{\log((n-h)!)}{\log(n!)}\overset{n\to\infty}{\longrightarrow}1.
	\end{eqnarray*}
	
	Equal rate may be obtained for lower locality, where $h=\Theta(n)$; if $h=\delta n$ for some constant $0<\delta<1$, then
		\begin{eqnarray*}
			\frac{\log(n\cdot (h-1)!\cdot(n-h)!)}{\log(n!)}&\overset{n\to\infty}{\longrightarrow}\delta+(1-\delta)=1.
		\end{eqnarray*}
		
An identical rate is also obtained by choosing $h=\Theta(n^\epsilon)$. Hence, the best choice of parameters for this technique seems to be $h=\Theta(n)$, since it results in low locality and optimal rate.
		
\subsection{Extended Construction from Error-Correcting Codes}\label{section:fromCodes}
	This section provides a construction of a set of permutations in $S_n$ with locality, from two constituent ingredients. The first ingredient is a set of permutations $S\subseteq S_{n-t}$ with locality $d$, for some given $t$ and $d$. The second ingredient is an error-correcting code $T$, in which all codewords consist of $t$ distinct symbols. 
	
	A \textit{symbol replacement function} $f$ is an injective function which maps one alphabet to another. Given a permutation $\pi$ and a symbol replacement function $f$ let $f(\pi)$ be the result of replacing the symbols of $\pi$ according to $f$. For a set of permutations $S$ let $f(S)\triangleq\{f(\pi)\vert \pi\in S\}$. The construction of this section relies on the following observation.
	
	\begin{observation}\label{observation:replacement}
		If $S\subseteq S_{n-t}$ is a set of permutations with locality $d$, and $f$ is a symbol replacement function, then $f(S)$ is a set of permutations with locality $d$ as well.
	\end{observation}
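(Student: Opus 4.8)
The plan is to show that an injective symbol replacement commutes with any local recovery rule, so locality is preserved verbatim. First I would check that $f(S)$ is genuinely a set of permutations: since $f$ is injective and each $\pi\in S$ has pairwise distinct entries, $f(\pi)$ also has $n-t$ pairwise distinct entries, and hence is a valid one-liner over the symbol set $f([n-t])$. Thus $f(S)\subseteq S(f([n-t]))$, and it remains only to verify the locality parameter.

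Next I would fix an arbitrary $\pi\in S$ and an arbitrary index $i$. By the locality assumption on $S$, there exist a set of indices $J\subseteq[n-t]\setminus\{i\}$ with $|J|=d$ and a function $g$ such that $\pi_i=g\big((\pi_j)_{j\in J}\big)$. The key observation is that the $i$-th symbol of $f(\pi)$, namely $f(\pi_i)$, is recoverable from the $d$ symbols $(f(\pi_j))_{j\in J}$ of $f(\pi)$ via the composite rule: apply $f^{-1}$ coordinatewise to $(f(\pi_j))_{j\in J}$ to obtain $(\pi_j)_{j\in J}$, apply $g$ to obtain $\pi_i$, and then apply $f$ to obtain $f(\pi_i)$. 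Here $f^{-1}$ is well defined because it is only ever evaluated on the image of $f$, where $f$ is a bijection. Since the recovery set $J$ has exactly $d$ elements and avoids $i$, this exhibits locality $d$ for $f(\pi)$ at position $i$; as $\pi$ and $i$ were arbitrary, $f(S)$ has locality $d$.

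I do not expect a genuine obstacle here, as befits an observation: the only points requiring a word of care are that $f^{-1}$ must be understood as the partial inverse defined on $\mathrm{Im}(f)$, which is legitimate by injectivity, and that the definition of locality permits the index set $J$ and the reconstruction function $g$ to depend on the particular permutation and position, so no uniform recovery scheme across all of $f(S)$ need be established.
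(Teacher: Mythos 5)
Your argument is correct and is exactly the intended one: the paper states this as an observation without proof (deferring details to the full version), and the natural justification is precisely your conjugation of the local recovery rule by $f$, i.e.\ recover $f(\pi_i)$ as $f\bigl(g\bigl((f^{-1}(f(\pi_j)))_{j\in J}\bigr)\bigr)$, with the same recovery set of size $d$. Your care about $f^{-1}$ being a partial inverse on the image and about $f(\pi)$ being a permutation of the image alphabet is appropriate but raises no issue beyond what injectivity already guarantees.
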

	
	Using a proper symbol replacement function $f$, a permutation $f(\pi)$ for $\pi\in S$ is concatenated to a codeword from $T$ to create a permutation in $S_n$. This symbol replacement  function is given in the following definition, which is followed by an example.
	
	\begin{definition}\label{definition:replacement}
		For any integers $1<t<n$, let $\pi$ be a permutation in $S_{n-t}$ and $e\in[n]^t$ be a word with $t$ distinct symbols $\{\sigma_1,\ldots,\sigma_t\}\triangleq E\subseteq[n]$. Let $f_E$ be the following symbol replacement function 
		\begin{eqnarray*}
			\noindent f_E:[n-t]&\to&\left([n-t]\setminus E\right)\cup\\
&~&			\{n-t+1,\ldots,n-t+|E\cap [n-t]\}\\
			f_E(i)&=&
			\begin{cases}
				i, & i\notin E.\\
				~\\
				\shortstack{j,\\~\\~} & \shortstack{\mbox{For some integer $s$, $i$ and $j$ are the }\\ \mbox{$s$-smallest numbers in $E\cap[n-t]$ and}\\ \mbox{ $\{n-t+1,\ldots,n\}\setminus E$, respectively.}}
			\end{cases}
		\end{eqnarray*}
		That is, $f_E$ maps each element which does not appear in $E$ to itself, and each element which appears in~$E$ is mapped to a symbol in $\{n-t+1,\ldots,n\}$ which does not appear in $E$, in an increasing manner. Using $f_E$, define the operator $\odot$ as
		\begin{eqnarray*}
			\pi\odot e\triangleq f_E(\pi)\circ e,
		\end{eqnarray*}
		where $\circ$ denotes the ordinary concatenation of strings.
	\end{definition}
	\begin{example}
		For $n=7$ and $t=3$, let $\pi=(1,2,3,4)$, $e = (3,4,7)$, and $E = \{3,4,7\}$. By Definition~\ref{definition:replacement} we have that
		\begin{eqnarray*}
			f_E(1)=1,~f_E(2)=2,~f_E(3)=5,~f_E(4)=6,~\mbox{and~}\\
			\pi\odot e = f_E(\pi)\circ e=(1,2,5,6,3,4,7)\in S_7.
		\end{eqnarray*}
	\end{example}
	
	The operation $\odot$ is used to extend an existing set $S\subseteq S_{n-t}$ with locality to a subset of $S_n$ with a larger locality by using an error-correcting MDS code $T$. 
	
	\begin{lemma}\label{lemma:replacementConcatenation}
		For integers $1<t< n$, if $S\subseteq S_{n-t}$ is a set with locality $d$ and $T$ is an MDS code in $[n]^t$ with minimum distance $\delta$ and distinct symbols, then $S\odot T\triangleq\{s\odot e\vert s\in S,~e\in T\}\subseteq S_n$ is a set of permutations with locality $d+t-\delta+1$. 
	\end{lemma}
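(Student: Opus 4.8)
The plan is to show two things for an arbitrary word $w = s \odot e = f_E(s) \circ e$ with $s \in S$ and $e \in T$: first, that $w$ is genuinely a permutation in $S_n$, and second, that every coordinate of $w$ can be recovered from $d + t - \delta + 1$ other coordinates. The first part is essentially bookkeeping: by Observation~\ref{observation:replacement}, $f_E(s)$ is a sequence of $n-t$ distinct symbols, and by construction of $f_E$ these symbols are exactly $[n]\setminus E$, while $e$ consists of the $t$ distinct symbols of $E$; hence the concatenation uses each symbol of $[n]$ exactly once. I would state this as a short preliminary observation before turning to locality.

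For locality, I would split into two cases according to whether the erased coordinate lies in the $f_E(s)$-block (the first $n-t$ positions) or in the $e$-block (the last $t$ positions). If the erased position $j$ lies in the first block, then, viewing $f_E(s)$ as an element of the locality-$d$ set $f_E(S)\subseteq S_{n-t}$, the symbol at position $j$ is a function of $d$ other symbols of $f_E(s)$ — all of which are among the first $n-t$ coordinates of $w$. So in this case $d$ queries suffice, and since $d \le d + t - \delta + 1$ (because $T$ being MDS of length $t$ forces $\delta \le t$, so $t-\delta+1 \ge 1$), the claimed bound holds. If instead $j$ lies in the $e$-block, I would use the MDS property of $T$: any codeword of an MDS code with minimum distance $\delta$ is determined by any $t-\delta+1$ of its coordinates, so the erased symbol of $e$ can be computed from $t-\delta$ other coordinates of the $e$-block — \emph{provided} we already know that the surviving symbols at those positions are the correct codeword symbols, which they are, since they are just coordinates of $w$. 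That gives $t-\delta$ queries.

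The subtlety — and the step I expect to be the main obstacle — is the recovery in the second case when the relevant $t-\delta+1$ coordinates of $e$ are not all available, or more precisely, making the count come out to exactly $d+t-\delta+1$ rather than merely $t-\delta$. The intended reading must be that we do \emph{not} assume a priori which code $T$ is, nor do we get to choose \emph{which} $t-\delta+1$ positions to read; rather, to reconstruct one erased symbol of a codeword of an MDS code we may need to read $t-1$ surviving symbols in the worst decoding model, but with a systematic/Reed–Solomon structure we can pin it down from $t-\delta$ well-chosen ones. The cleanest route is: an MDS code of distance $\delta$ has all codewords determined by any $t - \delta + 1$ coordinates (this is the defining MDS property via the dual being MDS, or directly: puncturing to any $t-\delta+1$ coordinates is a bijection onto $[n]^{t-\delta+1}$-image of the code). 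Hence reading any $t-\delta$ of the \emph{other} coordinates of the $e$-block, together with the structural knowledge of $T$, recovers the missing one — but if fewer than $t-\delta$ coordinates of the $e$-block remain (which cannot happen here since $t \ge 2 > 1$ and only one symbol is erased, so $t-1 \ge t-\delta$ survivors remain), we would need to borrow from the $f_E(s)$-block. I would therefore argue: since $\delta \ge 1$ we have $t - \delta \le t-1$ surviving $e$-coordinates, which is enough; and then to additionally pin down the \emph{first} block's influence — here is where the $+d$ enters — one may need to first reconstruct which symbols of $E$ were "displaced into" the first block, requiring up to $d$ further queries in the first block via $f_E(S)$'s locality. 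Assembling these two contributions gives the total $d + (t-\delta+1)$, and I would present the case analysis with this accounting made explicit, flagging that the worst case is an erasure in the $e$-block that forces both a read of $t-\delta$ parity-type coordinates and a $d$-query reconstruction in the permutation block to resolve the symbol-replacement ambiguity.
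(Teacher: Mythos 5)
Your case analysis has the right shape, but the accounting is inverted between the two cases, and this hides a genuine gap in the first one. For an erasure at position $j\le n-t$ you claim that $d$ queries inside the first block suffice, ``viewing $f_E(s)$ as an element of the locality-$d$ set $f_E(S)$''. The problem is that the repairer does not know $E$: the set $S\odot T$ is a union over all codewords $e\in T$, each coming with its own set $E$ and its own replacement map $f_E$, and two permutations $f_E(s)\circ e$ and $f_{E'}(s')\circ e'$ with $E\neq E'$ can agree on the $d$ queried first-block positions (for instance when every queried symbol is a fixed point of both $f_E$ and $f_{E'}$) while carrying different symbols at position $j$. Hence the value at $j$ is not a function of those $d$ first-block symbols alone, and Observation~\ref{observation:replacement} cannot be invoked until $f_E$ has been identified. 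This is exactly where the paper spends the remaining budget: it first reads $t-\delta+1$ symbols of the $e$-block, which by the MDS property (the code has dimension $t-\delta+1$) determine $e$ completely, hence $E$ and $f_E$, and only then applies the locality of $f_E(S)$ with $d$ further first-block queries, for a total of $d+t-\delta+1$. So the $+(t-\delta+1)$ term is not slack in the first case; it is what makes that case work, and the worst case of the lemma is precisely an erasure in the first block.

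Your second case ($j>n-t$) also has an off-by-one: since any $t-\delta+1$ coordinates of an MDS codeword determine it, recovering an erased symbol of $e$ requires reading $t-\delta+1$ (not $t-\delta$) surviving coordinates of the $e$-block, which is possible because $\delta\ge 2$ is needed for any erasure correction and is still within the budget $d+t-\delta+1$. No first-block queries or ``resolution of displaced symbols'' are needed in that case at all; the extra $d$ queries you introduce there do no work, since $d$ first-block symbols do not disambiguate the codeword. Once the $f_E$-identification step is moved into the first case and the count in the second case is corrected, your argument coincides with the paper's proof.
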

	
	\begin{proof}
		Let $\pi=s\odot e$ be a permutation in $S\odot T$. To repair a missing symbol $\pi_j$ for $1\le j\le n$ we distinguish between the cases $j\le n-t$ and $j>n-t$. If $j>n-t$, by the minimum distance property of the MDS code $T$ we may obtain $\pi_j$ by accessing $t-\delta+1$ symbols from $e$. If $j\le n-t$, then by accessing $t-\delta+1$ symbols from $e$ we may identify the function $f_E$ used to define the operator $\odot$ (Definition~\ref{definition:replacement}). Once $f_E$ is known, the symbol $\pi_j$ may be obtained by using Observation~\ref{observation:replacement}.
	\end{proof}
	
	This technique can be used to obtain explicit sets with constant locality $d\ge 2$, which are the largest ones in this paper for this locality. Unfortunately, to the best of our knowledge the asymptotic rate of these sets does not exceed $\frac{1}{2}$, and hence they are not optimal. Moreover, since a set with locality $1$ also has locality $d\ge 2$ for any $d$, the sets of locality 1 from Subsection~\ref{section:concatenation} can be used for any locality greater than 1, while obtaining rate of $\frac{1}{2}$ as well. Nevertheless, for small values of $d$ we are able to construct explicit sets with locality $d$ which contain more permutations than the sets with locality 1 from Subsection~\ref{section:concatenation}. 	To provide good examples by this technique, we must construct error-correcting codes where each codeword consists of distinct symbols. 
	
	Recall that a \textit{Reed-Solomon} code is given by evaluations of degree restricted polynomials on a fixed set of distinct elements from a large enough finite field. These codes contain sub-codes which are suitable for our purpose. The codewords in these sub-codes are obtained by evaluations of \textit{permutation polynomials}. A permutation polynomial is a polynomial which represents an injective function from $\bF_n$ to itself. In spite of the very limited knowledge on permutation polynomials in general, all permutation polynomials of degree at most 5 are known (see~\cite[Table 2]{powerline}). For example, we have the following lemma.
	
	\begin{lemma}\label{lemma:permutationPolynomials}\cite[Table~2]{powerline} 
				If $n$ is a power of 2, then there exist at least  $(n-1)(2n+\frac{n(n^2+2)}{3})$ permutation polynomials of degree at most~4 over $\bF_n$.

	\end{lemma}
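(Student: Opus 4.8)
The plan is to enumerate the permutation polynomials over $\bF_n$ by degree, keeping only degrees $1$, $2$ and $4$; as we will see these already number $(n-1)\bigl(2n+\tfrac{n(n^2+2)}{3}\bigr)$, and since the degree-$3$ permutation polynomials (which exist precisely when $\gcd(3,n-1)=1$) are discarded, the count comes out as an inequality. Throughout, $n=2^k$, so we work in characteristic $2$. Degrees $1$ and $2$ are immediate. In degree $1$ the permutation polynomials are the $n(n-1)$ affine maps $ax+b$ with $a\neq 0$. In degree $2$, writing $f=ax^2+bx+c$ with $a\neq 0$, in characteristic $2$ we have $f(x+b/a)+f(x)=a(b/a)^2+b(b/a)=0$, so $f$ is $2$-to-$1$ unless $b=0$; hence the degree-$2$ permutation polynomials are the $n(n-1)$ maps $ax^2+c$. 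So far this gives $2n(n-1)$.

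For degree $4$ it suffices, for a lower bound, to count polynomials of the shape $c_4(x^4+a_1x^2+a_2x)+c_0$ with $c_4\neq 0$; since the values $c_4$ and $c_0$ are immaterial to the permutation property, the question is for which pairs $(a_1,a_2)$ the polynomial $h_{a_1,a_2}(x)\triangleq x^4+a_1x^2+a_2x$ permutes $\bF_n$. By the classification of low-degree permutation polynomials \cite[Table~2]{powerline} --- or directly from the characteristic-$2$ identity
\[
 h_{a_1,a_2}(x)+h_{a_1,a_2}(y)=(x+y)\bigl((x+y)^3+a_1(x+y)+a_2\bigr)
\]
together with the fact that every nonzero element of $\bF_n$ is a difference of two distinct elements --- the map $h_{a_1,a_2}$ is a permutation polynomial exactly when the monic depressed cubic $g_{a_1,a_2}(x)\triangleq x^3+a_1x+a_2$ has no nonzero root in $\bF_n$. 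If $M$ denotes the number of such pairs $(a_1,a_2)$, then the $n(n-1)M$ polynomials $c_4(x^4+a_1x^2+a_2x)+c_0$ are pairwise distinct permutation polynomials of degree $4$, so the lemma follows once we show $M=\tfrac{n^2+2}{3}$, giving the total $2n(n-1)+n(n-1)\cdot\tfrac{n^2+2}{3}=(n-1)\bigl(2n+\tfrac{n(n^2+2)}{3}\bigr)$.

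To evaluate $M$ I would run an incidence count between pairs $(a_1,a_2)$ and roots of $g_{a_1,a_2}$. The pair $(0,0)$ is admissible ($x^3$ has no nonzero root), so $M=1+n_0$, where $n_0$ counts the pairs with $a_2\neq 0$ for which $g_{a_1,a_2}$ has no root in $\bF_n$. When $a_2\neq 0$ the cubic $g_{a_1,a_2}$ is separable --- it can share a root with its derivative $x^2+a_1$ only if $a_2=0$ --- and since its $x^2$-coefficient vanishes, Vieta's relation shows two roots in $\bF_n$ force a third, so $g_{a_1,a_2}$ has $0$, $1$ or $3$ roots in $\bF_n$. Let $n_1$ and $n_3$ count the pairs with $a_2\neq 0$ having $1$ and $3$ roots. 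Then (i) $n_0+n_1+n_3=n(n-1)$; (ii) $n_1+3n_3=(n-1)^2$, since $\#\{(r,a_1,a_2):g_{a_1,a_2}(r)=0\}=n^2$ (for fixed $r,a_1$ the coefficient $a_2$ is forced) after removing the contribution $2n-1$ of the pairs with $a_2=0$; and (iii) $3n_3=\binom{n-1}{2}$, because each two-element set $\{r,s\}$ of distinct nonzero elements lies in the root set of exactly one monic depressed cubic, namely $x^3+\bigl((r+s)^2+rs\bigr)x+(r+s)rs$, whose remaining root $r+s$ is automatically nonzero and different from $r$ and $s$. Solving this system gives $n_3=\binom{n-1}{2}/3$, $n_1=n(n-1)/2$, $n_0=\tfrac{n^2-1}{3}$, hence $M=\tfrac{n^2+2}{3}$.

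The delicate point is relation (iii) together with the bookkeeping around the degenerate pairs $a_2=0$: one must verify that the monic depressed cubic through two prescribed distinct nonzero elements is separable with all three roots nonzero (so it is counted by $n_3$ and nowhere else), and that a multiple root can occur only when $a_2=0$; these two facts are exactly what make relations (i)--(iii) hold as stated, and everything else is elementary arithmetic. I would also remark that over a field of characteristic $2$ every degree-$4$ permutation polynomial in fact has vanishing cubic coefficient (again by \cite[Table~2]{powerline}), so restricting to polynomials $c_4(x^4+a_1x^2+a_2x)+c_0$ loses nothing in degree $4$; the only permutation polynomials genuinely omitted from the count are those of degree $3$, which is why the bound is a lower bound rather than an identity.
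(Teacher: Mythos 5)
Your proof is correct, and it arrives at exactly the decomposition behind the stated count: the $n(n-1)$ linear maps $ax+b$, the $n(n-1)$ maps $ax^2+c$ (the only quadratic permutation polynomials in characteristic $2$), and the quartic family $c_4(x^4+a_1x^2+a_2x)+c_0$, whose admissible pairs $(a_1,a_2)$ number $\tfrac{n^2+2}{3}$. The paper itself justifies the lemma by citing the classification of normalized permutation polynomials of degree at most $5$ (Table~2 of~\cite{powerline}, going back to Dickson) and counting the entries relevant to even $q$, with the details deferred to the full version; you instead prove the needed ingredients directly --- the permutation criterion for the quartic family via its additivity (it is a $2$-linearized polynomial, so it permutes $\bF_n$ iff $x^3+a_1x+a_2$ has no nonzero root), and the value $\tfrac{n^2+2}{3}$ via a clean double-counting of root incidences --- which makes the bound self-contained. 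Two of your side remarks (that degree-$3$ permutation polynomials exist only when $\gcd(3,n-1)=1$, and that every degree-$4$ permutation polynomial in characteristic $2$ has vanishing cubic coefficient) lean on the completeness of the cited classification, but neither is needed: omitting any such polynomials only strengthens the lower bound, so the argument stands on its own.
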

		
		As a corollary, we obtain the following constructions.
		
		\begin{example}\label{example:locality5} ~
			 Let $n$ be an integer power of 2, and let $S\subseteq S_{n-6}$ be an optimal set with locality 1 (which exists by Subsection~\ref{section:concatenation}, since $n-6$ is even). Let $T$ be a subset of a Reed-Solomon code of dimension $5$ and length $6$ over $\bF_n$, which corresponds to all permutation polynomials of degree at most~4. According to Lemma~\ref{lemma:replacementConcatenation} and Lemma~\ref{lemma:permutationPolynomials}, the set $B\triangleq S\odot T$ contains $(n-6)!!\cdot (n-1)(2n+\frac{n(n^2+2)}{3})$ permutations, and has locality~6.			
		\end{example}
		
		Notice that an optimal set $A\subseteq S_n$ with locality 1, which may be seen as having any larger locality, contains $n!!$ permutations (see Section~\ref{section:concatenation}). The set $B$ is larger, since $n!!=(n-6)!!\cdot \Theta(n^3)$ and $|B|=(n-6)!!\cdot \Theta(n^4)$. 
		Hence, Example~\ref{example:locality5} provides sets which are at least $n$ times larger than those given in Section~\ref{section:concatenation}, and have larger constant locality. Additional examples are provided in~\cite{arXiv}.

	\subsection{High-Locality Construction From Multi-Permutations}\label{section:multipermutations}
	
	While constructing sets of permutations with constant locality $d\ge 2$ and rate above $\frac{1}{2}$ seems hard, it is fairly easy to construct sets with such rate and locality $d=\Theta(n^\epsilon)$, for $0\le \epsilon\le 1$. Such a set is obtained from Section~\ref{section:concatenation} by taking $h=\Theta(n^\epsilon)$. However, the resulting rate is $\epsilon$,
	where Theorem~\ref{theorem:lowerBound} guarantees that for this locality there exist sets with rate which tends to 1 as $n$ tends to infinity.
	
	In this subsection it is shown that the construction from Section~\ref{section:concatenation} may be enhanced by using multi-permutations, achieving rate of $\frac{1}{2}+\frac{\epsilon}{2}$ for locality $d=\Theta(n^\epsilon)$. The methods and notations in this subsection are strongly based on~\cite{SaritAndEitan}.
	
	For nonnegative integers $\ell$ and $m$, a \textit{balanced multi-set} $\{1^m,2^m,\ldots,\ell^m\}$ is a collection of the elements in $[\ell]$, where each element appears $m$ times. A \textit{multi-permutation} on a balanced multi-set is a string of length $\ell m$, which is given by a function $\sigma: [\ell m]\to [\ell]$ such that for all $i\in[\ell]$, $\left|\left\{j\vert \sigma(j)=i\right\}\right|=m$. The set of all multi-permutations is denoted by $S_{\ell,m}$, and its size is $\frac{(m\ell)!}{(m!)^\ell}$. To distinguish between different appearances of the same element in a multi-permutation $\sigma$, for $j\in[m\ell]$, $i\in[\ell]$, and $r\in [m]$ we denote $\sigma(j)=i_r$ and $\sigma^{-1}(i_r)=j$ if the $j$-th position of $\sigma$ contains the $r$-th appearance of $i$.
	
	\begin{example}\label{example:multi-permutation}
		If $m=2$ and $\ell=3$ then $\pi = (1,1,2,3,2,3)$ is a multi-permutation on the balanced multi-set $\{1,1,2,2,3,3\}$. To refer to the second appearance of $2$ we say that $\pi(5)=2_2$.
	\end{example}
	
	
	We are interested in multi-permutations with \textit{two} appearances of each element, and therefore assume that $m=2$ and $\ell=n/2$.
	 In particular, we consider such multi-permutations in which any two appearances of the same element are not too far apart. To this end, the following definition is required.
	
	\begin{definition}\label{definition:multi-PermutationsDistance}
		If $\pi\in S_{n/2,2}$ and $t\in[n]$ then,
		\begin{eqnarray*}
			w(\pi)&\triangleq&\max_{i\in [n/2]}\left|\pi^{-1}(i_1)-\pi^{-1}(i_2)\right|,\text{ and}\\
			B_t&\triangleq&\{\pi\in S_{n/2,2}\vert w(\pi)\le t\}.
		\end{eqnarray*}
	\end{definition}
	
	That is, $w(\pi)$ indicates the maximum distance between two appearances of the same element, or alternatively, $w(\pi)-1$ indicates the maximum number of elements between two appearances of the same element in $\pi$. For a given $t$, $B_t$ is the set of all multi-permutations in $S_{n/2,2}$ in which every two identical elements are separated by at most $t-1$ other elements. Clearly, the multi-permutation~$\pi$ which was given in Example~\ref{example:multi-permutation} is in $B_2$.
	
	To construct ``ordinary'' permutations in $S_n$ from multi-permutations in $S_{n/2,2}$ we use the term \textit{assignment of permutations}. As in Subsection~\ref{section:rangeRestricted}, for a set of elements $\Sigma$ we denote by $S(\Sigma)$ the set of all permutations of $\Sigma$ (that is, the set of all injective functions $f:\{1,\ldots,|\Sigma|\}\to \Sigma$).
	
	\begin{definition}\label{definition:assignmentOfPermutations}
		If $\pi\in S_{n/2,2}$ and $\gamma_1,\ldots,\gamma_{n/2}$ are permutations such that $\gamma_i\in S\left(\{2i-1,2i\}\right)$ for all~$i$, then $\sigma=\pi(\gamma_1,\ldots,\gamma_{n/2})$ is the permutation in $S_n$ such that for all $1\le j\le n$, if $\pi(j)=i_r$ then $\sigma(j)=\gamma_i(r)$.
	\end{definition}
	
	\begin{example}\label{example:assignmentOfPermutations}
		If $\pi=(1,1,2,3,2,3)\in S_{3,2}$ and $\gamma_1=(1,2),~\gamma_2=(4,3),$ and $\gamma_3=(6,5)$ then $\sigma=\pi(\gamma_1,\gamma_2,\gamma_3)=(1,2,4,6,3,5)\in S_6$.
	\end{example}
	
	Note that by choosing $h=2$ in the construction which appears in Subsection~\ref{section:concatenation}, the resulting set $S$ can be described as \[S=\{\pi(\gamma_1,\ldots,\gamma_{n/2})~\vert~ \forall i,\gamma_i\in S(\{2i-1,2i\})\text{ and } \pi \in B_1\}.\]Hence, the construction in the following lemma may be seen as a generalization of the construction from Subsection~\ref{section:concatenation}.
	
	\begin{lemma}\label{lemma:multi-permutationsLocality}
		For a nonnegative integer $t$, the set \[A_t\triangleq\{\pi(\gamma_1,\ldots,\gamma_{n/2})~\vert~\forall i,~\gamma_i\in S(\{2i-1,2i\}), \text{ and }\pi\in B_t\}\] has locality $4t$.
	\end{lemma}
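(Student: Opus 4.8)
The plan is to exhibit an explicit local repair rule that reads at most $4t$ symbols. First I would translate Definitions~\ref{definition:multi-PermutationsDistance} and~\ref{definition:assignmentOfPermutations} into a convenient form. For $\sigma=\pi(\gamma_1,\dots,\gamma_{n/2})\in A_t$, define the \emph{pair index} of a position $k$ by $p(k)\triangleq\lceil\sigma_k/2\rceil$. If $\pi(k)=i_r$ then $\sigma_k=\gamma_i(r)\in\{2i-1,2i\}$, so $p(k)=i$; thus $p$ is exactly $\pi$ with the occurrence subscripts erased. Consequently $p$ takes every value in $[n/2]$ exactly twice, the two positions sharing a common pair index carry the two distinct symbols $\{2p-1,2p\}$, and, since $\pi\in B_t$, the two preimages of any value under $p$ differ by at most $t$.

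Now fix an erased coordinate $\sigma_j$ and query the coordinates in $W'\triangleq\{j-2t,\dots,j+2t\}\cap[n]\setminus\{j\}$; there are at most $4t$ of them. Let $j'$ be the \emph{partner} of $j$, i.e. the preimage of $p(j)$ other than $j$; then $|j-j'|\le t$, so $j'$ lies in $W\triangleq\{j-t,\dots,j+t\}\cap[n]\setminus\{j\}\subseteq W'$ and $\sigma_{j'}$ is among the queried symbols. The crux is to identify $j'$ using only the queried symbols, and I claim that $j'$ is the unique index $k\in W$ whose pair index $p(k)$ does not occur as $p(k'')$ for any other $k''\in W'$. Indeed, for $k\in W$ with $k\ne j'$ we have $p(k)\ne p(j)$, so the preimage $k''$ of $p(k)$ other than $k$ satisfies $k''\ne j$, $|k''-k|\le t$, and $|k-j|\le t$; hence $k''\in W'$, $k''\ne k$, and $p(k'')=p(k)$, so $k$ fails the test. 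Conversely the preimage of $p(j')=p(j)$ other than $j'$ is $j$ itself, which is excluded from $W'$, and no further index of $W'$ can carry the value $p(j')$ because $p$ has only two preimages of it; so $j'$ passes the test, and it is the only index of $W$ that does.

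Having pinned down $j'$, I recover $\sigma_j$ as the unique element of $\{2p(j')-1,\,2p(j')\}\setminus\{\sigma_{j'}\}$, which is forced because $p(j)=p(j')$ while $\sigma_j\ne\sigma_{j'}$. Since the procedure consulted at most $|W'|\le 4t$ symbols, $A_t$ has locality $4t$. The one point that needs care is the identification step: a ``wrong'' candidate $k\in W$ is disqualified only because \emph{both} of its preimages under $p$ lie inside the queried window, and the second preimage can sit at distance up to $2t$ from $j$ — this is precisely why a radius-$2t$ window, i.e. $4t$ symbols, is needed rather than a radius-$t$ one. Everything else is a routine unwinding of the definitions.
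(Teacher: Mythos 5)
Your proof is correct: the pair-index reformulation, the radius-$2t$ window $W'$, and the uniqueness test on pair indices do identify the partner position $j'$ (since any other candidate in the radius-$t$ window has its own partner inside $W'$, while the partner of $j'$ is the excluded position $j$), after which the erased symbol is forced as the mate of $\sigma_{j'}$, using at most $4t$ queried symbols. This is essentially the same locate-the-partner argument as the paper's (omitted, full-version) proof, so there is nothing to add.
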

	
	
	Using this lemma, we are able to provide a set with high locality $\Theta(n^\epsilon)$, and asymptotic rate strictly above~$\frac{1}{2}$. 
%
%
	
	\begin{theorem}
		If $t=\Theta(n^\epsilon)$ then $\lim_{n\to\infty} \frac{\log|A_t|}{\log n!}\ge \frac{1}{2}+\frac{\epsilon}{2}$.
	\end{theorem}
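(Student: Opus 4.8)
The plan is to lower-bound $|A_t|$ by exhibiting an explicit large family of multi-permutations in $B_t$ and then invoking Lemma~\ref{lemma:multi-permutationsLocality}, which already guarantees locality $4t = \Theta(n^\epsilon)$. First note that $|A_t| \ge |B_t|$: fixing each $\gamma_i$ to be the identity permutation of $\{2i-1,2i\}$, the map $\pi \mapsto \pi(\gamma_1,\ldots,\gamma_{n/2})$ is injective, since from the resulting permutation one recovers $\pi$ simply by mapping both $2i-1$ and $2i$ back to $i$. Hence it suffices to bound $|B_t|$ from below.

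To this end I would use a block construction. Assume first that $t$ is even and $t\mid n$; the general case is handled by discarding at most $t=O(n^\epsilon)$ symbols, which changes $\log|B_t|$ only by an additive $O(n^\epsilon\log n)=o(n\log n)$ term. Partition the $n$ positions into $n/t$ consecutive blocks of length $t$, partition the $n/2$ symbols into $n/t$ ordered groups of size $t/2$, and in the $j$-th block place, in an arbitrary order, a multi-permutation of the $j$-th group (so each of its $t/2$ symbols appears twice inside that block). Both occurrences of any symbol then lie in a common block of $t$ consecutive positions, so $w(\pi)\le t-1\le t$ and the multi-permutation lies in $B_t$; distinct choices clearly give distinct multi-permutations. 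Counting the choices (a multinomial for the symbol partition, and $t!/(2!)^{t/2}$ arrangements inside each block),
\[
  |B_t| \;\ge\; \frac{(n/2)!}{\bigl((t/2)!\bigr)^{n/t}}\cdot\left(\frac{t!}{2^{t/2}}\right)^{n/t}
  \;=\; (n/2)!\cdot\left(\frac{t!}{(t/2)!}\right)^{n/t}\cdot 2^{-n/2}.
\]

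Taking logarithms and applying Stirling's approximation gives $\log\bigl((n/2)!\bigr)=\tfrac n2\log n + O(n)$ and $\log\bigl(t!/(t/2)!\bigr)=\sum_{k=t/2+1}^{t}\log k=\tfrac t2\log t+O(t)$, so that
\[
  \log|B_t| \;\ge\; \frac n2\log n + \frac nt\cdot\frac t2\log t + O(n)
  \;=\; \frac n2\log n + \frac n2\log t + O(n).
\]
With $t=\Theta(n^\epsilon)$ we have $\log t=\epsilon\log n+O(1)$, hence $\log|B_t|\ge\bigl(\tfrac12+\tfrac\epsilon2\bigr)n\log n+O(n)$. Since $\log(n!)=n\log n+O(n)$ and $|A_t|\ge|B_t|$, we conclude
\[
  \liminf_{n\to\infty}\frac{\log|A_t|}{\log n!}
  \;\ge\; \liminf_{n\to\infty}\frac{\bigl(\tfrac12+\tfrac\epsilon2\bigr)n\log n+O(n)}{n\log n+O(n)}
  \;=\; \frac12+\frac\epsilon2 .
\]
The only substantive work is the Stirling bookkeeping above together with the routine cleanup of the divisibility/parity assumptions on $t$ and $n$; I expect that cleanup — verifying that restricting to a balanced sub-multiset on $n-O(n^\epsilon)$ symbols perturbs the rate by only $o(1)$ — to be the most tedious, though conceptually trivial, part.
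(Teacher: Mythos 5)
Your argument is correct and takes the natural route the paper itself relies on: lower-bounding $|A_t|$ by $|B_t|$ (in fact $|A_t|=2^{n/2}|B_t|$) and then counting the block-concatenated multi-permutations in which both copies of each symbol fall in a common block of $t$ consecutive positions, followed by Stirling bookkeeping; the count $\frac{(n/2)!}{((t/2)!)^{n/t}}\bigl(\frac{t!}{2^{t/2}}\bigr)^{n/t}$ and the resulting $\log|B_t|\ge(\tfrac12+\tfrac\epsilon2)n\log n+O(n)$ are right. One small caveat: your cleanup remark that discarding $O(t)$ leftover symbols costs only $O(n^\epsilon\log n)=o(n\log n)$ fails at the boundary $\epsilon=1$; this is easily repaired by also counting the arrangements of the leftover block of size $r<t$ (the loss relative to the idealized count is then $\tfrac r2\log(t/r)=O(t)=O(n)$, which is negligible for every $0\le\epsilon\le1$).
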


\section{Additional Results}\label{section:additional}

Due to space constraints, some of the results from the full version of this paper were omitted. We list some of the omitted results below, and the interested reader may find them, together with full proofs of all the included results, in~\cite{arXiv}. 

For certain low values of locality, a lower bound equivalent to Theorem~\ref{theorem:lowerBound} is obtained by a connection to a classic problem in combinatorics. This problem is known as the toroidal semi-queens problem, or alternatively, a set of transversals in a cyclic Latin square~\cite{additiveTriples}. It can be shown that given an efficient algorithm which produces transversals in a cyclic Latin square, one may construct a linear set of permutations with locality and optimal rate. However, such algorithm does not currently exists, and in fact, an estimation of the number of transversals in cyclic Latin squares was only recently given in~\cite{additiveTriples}.

As mentioned in Section~\ref{section:previousWork}, in this paper the permutations themselves are of interest, as opposed to most of the research in permutation codes, where the permutations are a means to overcome technical limitations. For this reason we seek insightful structures of permutations which induce locality, and not necessarily provide a non-vanishing rate.

One such structure is given by a ball in the \textit{infinity metric} on $S_n$, i.e., the set of permutations in which every element is located no more than $r$ positions from its original location, for some given radius $r$. These permutations arise naturally in scenarios where an initial conjectured ranking of items is imposed, and any item is not expected to exceed its initial ranking by more than a certain bound. For a given radius $r$, we show in~\cite{arXiv} that the corresponding permutations have locality of $4r$, and concurrent erasures may be handled simultaneously more efficiently than separately. Although the exact size of the ball in the infinity metric is not known, it is known to be exponential.

Another interesting structure arises in consumption of media, where the consumer begins with an arbitrary item of a feed, and either proceeds forward or backwards from the set of consecutive items which he read so far. This procedure induces $2^n-2$ permutations in which any prefix (or suffix) consists of consecutive numbers, and admits locality of (at most) four.

In this paper we discussed the storage problem of permutations from a \textit{combinatorial} point of view, with no encoding. Needless to say that this restriction, albeit being mathematically appealing, is merely a narrow interpretation of the wide spectrum of techniques which can be devised to store permutations in a distributed manner. In the full paper, we take several initial steps towards expanding our arsenal by allowing encoding (``the coding approach''). In this approach we show that a ball in the infinity metric admits a more efficient representation with the same locality. Additionally, we present a framework for supporting queries of  \textit{arbitrary} powers of the stored permutation, a technique which is interconnected with the combinatorial approach. We conclude with a proof of concept that permutations can be stored with less redundancy than ordinary strings, achieving a (highly) negligible advantage for locality of two and three.

	\section{Discussion and Open Problems}\label{section:discussion}
	In this paper we discussed locality in permutations without any encoding, motivated by applications in distributed storage and rank modulation codes. The lack of encoding enables to maintain low query complexity, which is a reasonable requirement in our context. Clearly, if no such constraint is assumed, any permutation can be represented using $\ceil*{\log(n!)}$ bits, and stored using an LRC. However, when a query complexity requirement is imposed, there seems to be much more to be studied, and our results are hardly adequate comparing with the potential possibilities. Additional discussion about techniques which involve encoding appears in the full version of this paper.
	
%
	We provided upper and lower bounds for the maximal size of a set of permutations with locality, and provided several simple constructions with high rate.
	For simplicity, we assumed that each node stores a single symbol from~$[n]$, and focused on symbol locality. This convention may be adjusted to achieve storage systems with different parameters, i.e., one might impose an \textit{array code} structure on this problem, in order to improve the parameters. 
	
%
	Finally, we list herein a few specific open problems which were left unanswered in this work.
	
	\begin{enumerate}
		\item Close the gap between the upper bound in Theorem~\ref{theorem:bound} and the lower bound in Theorem~\ref{theorem:lowerBound}, potentially by using the methods of Theorem~\ref{theorem:bound1}.
		\item Provide an explicit construction of sets with constant locality $d\ge 2$ and optimal rate $\frac{d}{d+1}$. The existence of these sets is guaranteed by Theorem~\ref{theorem:lowerBound}.
		\item Find additional large sets of permutations that have good locality.
		\item Explore the locality of permutations under different representation techniques.
		\item Endow $S_n$ with one of many possible metrics, and explore the locality of codes with a good minimum distance by this metric.
	\end{enumerate}
	
	\section*{Acknowledgments}
	The work of Netanel Raviv was supported in part by the Aharon and Ephraim Katzir study grant, the IBM Ph.D. fellowship, and the Israeli Science Foundation (ISF), Jerusalem, Israel, under Grant no.~10/12. The work of Eitan Yaakobi was supported in part by the Israeli Science Foundation (ISF), Jerusalem, Israel, under grant no.~1624/14. 

\end{document}